\documentclass[lettersize,journal]{IEEEtran}

\usepackage{amsmath,amssymb,amsfonts}
\usepackage{array}
\usepackage{booktabs}
\usepackage{graphicx}
\usepackage{url}
\usepackage{stfloats}
\usepackage{cite}

\newtheorem{definition}{Definition}
\newtheorem{proposition}{Proposition}
\newtheorem{corollary}{Corollary}
\newtheorem{remark}{Remark}

\newcommand{\Rp}{\mathbb{R}}
\newcommand{\Cp}{\mathbb{C}}
\newcommand{\mstar}{m^\star}
\newcommand{\dmix}{\Delta_{\mathrm{mix}}}
\newcommand{\dstar}{\Delta_{\star}}
\DeclareMathOperator{\spn}{span}

\begin{document}

\title{What Selects, What Reconstructs: Repairing\\ Exemplar-Based Complex-Spectrum Separation}

\author{Maxime~Baelde%
\thanks{This work has been submitted to the IEEE for possible publication. Copyright may be transferred without notice, after which this version may no longer be accessible.}%
}

\markboth{IEEE/ACM Transactions on Audio, Speech and Language Processing}%
{Baelde: What Selects, What Reconstructs}

\maketitle

\begin{abstract}
Exemplar methods separate a mixture by picking one learned spectrum per source and deforming it until it explains the observation, making one deformation class both reconstructor and selector. We show that the second role is empty as soon as the class can interpolate: the rule then ranks candidates on its regulariser, a choice made before the data, and the estimates sum back to the mixture whichever candidate wins. The condition is a parameter count, so the diagnosis runs before any experiment. On free per-bin deformation of complex spectra it explains the observed pathologies at once: a criterion that ranks candidates by their loudness, and half an output that is a mask on the mixture rather than an exemplar. The same theorem prescribes the repair, a selection class poorer than the reconstruction class: one complex gain and one pure delay rank the candidates, and a local combination of the best-aligned atoms, fitted jointly in closed form, rebuilds them. On MUSDB18 against the exact ceiling of the masking class, the distance between the criterion and an oracle inside its own candidate pool falls under the rigid selector from 6.2--7.7 to 0.5--2.8~dB, though only 0.9--1.2~dB of that reaches the output, and the per-frame latency of the deployed rule by a factor of 47 to 806. One lock remains, quantified: atoms are scored against the mixture, so the score carries a term for the other source that absorbs the capacity the reconstruction class gains, leaving the output 10.0~dB under the ceiling. Ranking hypotheses by the residual of a fit free enough to interpolate ranks them on the regulariser alone.
\end{abstract}

\begin{IEEEkeywords}
Audio source separation, exemplar-based methods, model selection, phase estimation, real-time inference, non-negative matrix factorisation.
\end{IEEEkeywords}

\section{Introduction}

Separating a single-channel mixture from seconds of training material is the regime in which exemplar methods are attractive. Rather than fit a parametric model, they store frames of each source and explain an observed mixture frame by picking one stored frame per source and deforming the picks until they match the observation. The idea goes back to one-microphone separation by stored spectra \cite{roweis2000}, was developed into supervised and semi-supervised dictionary decompositions \cite{smaragdis2007supervised} and into exemplar dictionaries for noise-robust recognition \cite{gemmeke2011exemplar}, and it remains attractive for the same two reasons: nothing is learned beyond the dictionary itself, and inference is a search rather than an optimisation.

A search over a large dictionary is expensive, and most of the effort spent on this family has gone there, from locality-sensitive hashing over a manifold-preserving embedding \cite{kim2015lsh} to bitwise and boosted variants of the same idea \cite{kim2018bitwise,kim2020boosted}. That line accelerates the search. It says nothing about the criterion the search optimises, and a fast search of a criterion that cannot rank returns the same wrong answer faster. The criterion is the subject of this paper.

The criterion we study is that of a method pushing the exemplar idea onto complex spectra \cite{baelde2019thesis}, referred to below as Def-MAP. Each candidate pair of stored spectra is deformed by a transform free in every frequency bin, the deformation minimising a prior penalty is available in closed form, and the pair whose deformation incurs the smallest penalty is retained. The construction combines an expressive reconstruction with a closed-form inference, which is what a real-time separator requires. Measured, it separates poorly on both sources, and one frame of inference takes orders of magnitude longer than a frame of audio lasts. This paper identifies what that failure is a failure of, under a condition that is itself a parameter count and that holds wherever hypotheses are ranked by the residual of a fit rich enough to reach the data, greedy dictionary selection and residual-ranked template search among them: choosing a model by the residual it leaves then carries no information at all.

Two explanations of such a result demand opposite repairs. Either the dictionary does not carry the information, in which case the method needs more or better exemplars, or the rule that picks candidates is broken, in which case the material is there and the criterion has to be rewritten. Distinguishing them is the first contribution of this paper, and the answer is available in closed form before any experiment. The deformation grants itself more free real parameters than the frame has real observations, so every candidate pair reproduces the mixture exactly and no candidate is ever refuted by the data. What the criterion then ranks is the distance from the required correction to a reference transform, which is to say its own regulariser, chosen before the data was seen. A fit residual carries selection information only when the model class could have failed to fit, and this class never fails.

The diagnosis dictates the repair, which is the second contribution: the class that ranks and the class that rebuilds must not be the same, and the one that ranks must be the poorer of the two. We rank under a rigid, physically motivated class, one complex gain and one pure delay per atom, and rebuild under a local combination of the best-aligned atoms of each source fitted jointly by a single complex least squares. A time offset is a phase ramp linear in frequency, and phase models built on that identity or on the local consistency of a short-time spectrum are established \cite{leroux2010consistency,leroux2013consistent,magron2017anisotropic,magron2018bayesian}, the anisotropic Gaussian family being the closest antecedent in that it puts a signal-model phase prior inside the estimator. The contribution here is the division of labour rather than the ramp itself: the ramp selects, a richer class reconstructs, and a ramp asked to do both would inherit the very deficit this paper measures. Both stages stay non-iterative, and the reconstruction stage runs faster than the pair search it replaces, having dropped that search altogether.

Where the rebuilt class lands is itself a result: a local combination of aligned atoms with complex gains, free of non-negativity and non-iterative, is a close relative of non-negative matrix factorisation with a fixed basis \cite{leeseung2001,virtanen2007nmf,fevotte2009divergences}, so a supervised NMF baseline at comparable latency is reported rather than argued away. Its closest antecedent on the complex side is Complex NMF \cite{kameoka2009cnmf}, which likewise carries a phase per atom; what differs is what gets optimised, a basis and its phases learned there by iterative non-convex descent, against a frozen dictionary, phases fixed by ramp alignment and a closed-form fit here. The claim of novelty is on the division of labour between selection and reconstruction, not on the reconstruction class.

The third contribution is the measurement, on MUSDB18 \cite{musdb18} with time-domain source-to-distortion ratio after windowed overlap-add resynthesis. Most separators, classical or learned, return a real gain per bin \cite{wang2014targets,erdogan2015,williamson2016crm}. The exact ceiling of that class, and its distance from the ideal ratio mask, are established in the companion paper of this one \cite{irmbest,baelde2026ceiling}. Every row below therefore carries that ceiling alongside the ideal ratio mask, since a deficit read against the mask alone appears smaller than it is. Trained models of the task, with Open-Unmix \cite{stoter2019openunmix} as the reference point, sit above every rule discussed here, and are measured under the same protocol rather than quoted from another. Under that protocol the criterion's distance to an oracle inside its own candidate pool grows with the dictionary and collapses under the rigid class, with no change to the dictionary, the reconstruction class or the metric.

The fourth contribution is a lock the repair exposes and does not close. Selection ranks atoms by their agreement with the mixture, and the mixture is not the target, so the score of a candidate of one source carries a term measuring its agreement with the other, which vanishes only for orthogonal dictionaries. The rule therefore prefers atoms explaining the mixture to atoms resembling the source, the preference compounds as the reconstruction class grows, and the whole capacity that class gains is absorbed by the selection that populates it. Section~\ref{sec:level} reports where that leaves the absolute level, and Section~\ref{sec:lock} names the criterion that would close the lock, whose effect is quantified in advance.

\section{Residual-Based Model Selection}
\label{sec:selection}

This section builds the smallest object on which the question of the paper can be posed: a family of admissible deformations, a rule that ranks hypotheses by what they fail to explain, and the condition under which that rule carries no information at all.
\emph{Notation.} The analysis operates on frames of $L$ real samples, $L$ even, and a frame is a vector of $\Cp^F$ with $F = L/2 + 1$ the number of non-redundant bins of a real transform of that length. For $\mathbf{z} \in \Cp^F$, $\Re \mathbf{z}$ and $\Im \mathbf{z}$ denote its real and imaginary parts, both taken entrywise, so that $\Re \mathbf{z}$ and $\Im \mathbf{z}$ lie in $\Rp^F$ and $\mathbf{z} = \Re \mathbf{z} + \mathrm{i}\, \Im \mathbf{z}$, the imaginary unit being written $\mathrm{i}$ throughout so that $i$ stays available as a source index. Products between two vectors are entrywise unless a matrix is written. Source $i \in \{1,2\}$ owns a dictionary $\mathcal{C}_i \subset \Cp^F$ of $N$ learned frames, called atoms, and $\spn$ denotes the complex linear span. The observation is one mixture frame $\mathbf{x} \in \Cp^F$.

\begin{definition}[deformation class, attainable set]
\label{def:class}
A deformation class is a set $\mathcal{T}$ of maps from $\Cp^F$ to $\Cp^F$. At a candidate pair $(\mathbf{c}_1, \mathbf{c}_2) \in \mathcal{C}_1 \times \mathcal{C}_2$ its attainable set is
\begin{equation}
\mathcal{M}(\mathbf{c}_1, \mathbf{c}_2) = \left\{ T_1 \mathbf{c}_1 + T_2 \mathbf{c}_2 \ : \ T_1, T_2 \in \mathcal{T} \right\} \subset \Cp^F .
\label{eq:attainable}
\end{equation}
The class \emph{interpolates} $\mathbf{x}$ at that pair when $\mathbf{x} \in \mathcal{M}(\mathbf{c}_1, \mathbf{c}_2)$, and interpolates on $\mathcal{C}_1 \times \mathcal{C}_2$ when it does so at every pair.
\end{definition}

\begin{definition}[residual selection rule]
\label{def:rule}
Let $\Omega$ be a non-negative function on $\mathcal{T} \times \mathcal{T}$. The residual selection rule associated with $(\mathcal{T}, \Omega)$ scores a candidate pair by
\begin{equation}
\ell(\mathbf{c}_1, \mathbf{c}_2) = \min_{T_1, T_2 \in \mathcal{T}} \left\{ \Omega(T_1, T_2) \ : \ T_1 \mathbf{c}_1 + T_2 \mathbf{c}_2 = \mathbf{x} \right\} ,
\label{eq:rule}
\end{equation}
with the convention $\ell = +\infty$ when the constraint set is empty, retains a minimiser of $\ell$ over $\mathcal{C}_1 \times \mathcal{C}_2$, and returns the estimates $\hat{\mathbf{s}}_i = T_i^\star \mathbf{c}_i$ built from the minimising transforms at that pair.
\end{definition}

The rule is written in its constrained form because that is the form the method of Section~\ref{sec:free} takes. The penalised form, in which the squared residual $\lVert \mathbf{x} - T_1\mathbf{c}_1 - T_2\mathbf{c}_2 \rVert^2$ is added to $\lambda \Omega$ rather than driven to zero, behaves identically in the regime that matters here: when the constraint set is non-empty and $\Omega$ is a positive-definite quadratic, its optimal value is $\lambda\, \ell + O(\lambda^2)$ for $\lambda$ small enough, so the residual contributes to the ranking one order below the regulariser.

\begin{proposition}[selection degeneracy]
\label{prop:degeneracy}
Suppose $\mathcal{T}$ interpolates $\mathbf{x}$ on $\mathcal{C}_1 \times \mathcal{C}_2$ and that the minimum in \eqref{eq:rule} is attained at every pair. Then $\ell$ is finite everywhere, the retained pair is a minimiser of
\begin{equation}
(\mathbf{c}_1, \mathbf{c}_2) \longmapsto \Omega(T_1^\star, T_2^\star)
\label{eq:reduced}
\end{equation}
and of nothing else, and the estimates satisfy $\hat{\mathbf{s}}_1 + \hat{\mathbf{s}}_2 = \mathbf{x}$ whichever pair is retained.
\end{proposition}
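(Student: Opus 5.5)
The argument unpacks Definitions~\ref{def:class} and~\ref{def:rule}; it splits into finiteness, identification of the objective, and the sum identity.

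\emph{Finiteness.} Interpolation on $\mathcal{C}_1 \times \mathcal{C}_2$ means that for every pair $\mathbf{x} \in \mathcal{M}(\mathbf{c}_1,\mathbf{c}_2)$. By \eqref{eq:attainable} there are then $T_1, T_2 \in \mathcal{T}$ with $T_1\mathbf{c}_1 + T_2\mathbf{c}_2 = \mathbf{x}$, so the constraint set in \eqref{eq:rule} is non-empty and the convention $\ell = +\infty$ never applies. Since $\Omega$ is real-valued and non-negative, the infimum over a non-empty set lies in $[0,\infty)$. By the attainment hypothesis there is a minimiser $(T_1^\star, T_2^\star)$ of the constraint set with $\ell(\mathbf{c}_1,\mathbf{c}_2) = \Omega(T_1^\star, T_2^\star)$, and this holds at every pair.

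\emph{The retained pair optimises the regulariser and nothing else.} The previous step gives the identity $\ell = \Omega(T_1^\star,T_2^\star)$ as functions of the pair. By Definition~\ref{def:rule}, the retained pair is a minimiser of $\ell$, hence a minimiser of the map \eqref{eq:reduced}, and conversely. For ``and of nothing else'' I would show that no residual term enters $\ell$. Every feasible pair of transforms leaves residual $\mathbf{x} - T_1\mathbf{c}_1 - T_2\mathbf{c}_2 = \mathbf{0}$, so the residual is the same constant, zero, across all candidates. Therefore $\ell$ depends on the data only through which transforms are feasible, and its value is the regulariser evaluated there. Any other candidate criterion that agrees with the rule must therefore be an order-preserving function of \eqref{eq:reduced}.

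This is the step I expect to need care: ``nothing else'' is a semantic claim. I would make it precise as the statement that $\ell$ coincides pointwise with \eqref{eq:reduced} and that the fit term is identically zero on the feasible set. I would not try to prove a uniqueness statement about criteria. One subtlety to flag: $(T_1^\star,T_2^\star)$ need not be unique. Any choice of minimiser gives the same value $\ell$, so \eqref{eq:reduced} is well defined.

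\emph{Sum identity.} The estimates are $\hat{\mathbf{s}}_i = T_i^\star\mathbf{c}_i$ with $(T_1^\star,T_2^\star)$ feasible at the retained pair. Hence $\hat{\mathbf{s}}_1+\hat{\mathbf{s}}_2 = T_1^\star\mathbf{c}_1 + T_2^\star\mathbf{c}_2 = \mathbf{x}$. Nothing about the retained pair was used beyond feasibility, so the identity holds whichever pair is retained.
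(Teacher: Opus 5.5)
Your proposal is correct and follows the paper's proof essentially step for step. Non-emptiness of the constraint set gives finiteness and $\ell = \Omega(T_1^\star, T_2^\star)$ at every pair, so the ranking is the regulariser's alone, and the sum identity is the constraint evaluated at the retained pair. Your remarks on non-unique minimisers, and on making ``nothing else'' precise as a fit term that vanishes identically on the feasible set, are sound refinements that the paper leaves implicit.
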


\begin{IEEEproof}
Finiteness and \eqref{eq:reduced} are Definition~\ref{def:rule} read under the hypothesis: the constraint set is non-empty at every pair, so the score of every pair is the value of $\Omega$ at its own optimal transforms, and no other term enters the comparison. The summation identity is the constraint itself, $\hat{\mathbf{s}}_1 + \hat{\mathbf{s}}_2 = T_1^\star \mathbf{c}_1 + T_2^\star \mathbf{c}_2 = \mathbf{x}$, which holds at every feasible pair and therefore at the retained one.
\end{IEEEproof}

Three consequences are worth separating, because they fail in different ways. No candidate pair is ever refuted by the observation, since every pair explains it exactly, so the rule has no power to reject. The ranking it produces is whatever $\Omega$ rewards, a modelling choice fixed before the data was seen and carrying no measurement of it. And the returned estimates sum to the observation by construction, so any quantity fixed by that identity is insensitive to the dictionary, to the candidates and to the observation alike, which turns the degeneracy into something a corpus can confirm.

The hypothesis of Proposition~\ref{prop:degeneracy} is checkable before any experiment, by counting. Write $p$ for the number of real parameters that each transform of $\mathcal{T}$ carries, so that a pair carries $2p$, and $q$ for the number of real scalar constraints that $T_1\mathbf{c}_1 + T_2\mathbf{c}_2 = \mathbf{x}$ imposes; $p$ keeps that per-transform meaning throughout the paper. If the constraints are linear in those parameters and $2p > q$, the constraint set is a non-empty affine set of dimension at least $2p - q$ at every pair where the constraint matrix has full row rank, and the class interpolates by construction rather than by chance. Two cautions attend the count. Full row rank is a genuine hypothesis and its failure is not a technicality, as Section~\ref{sec:free} shows on the pairs that are silent in a bin. And $q$ is the number of independent real constraints, which is not twice the number of bins: the bins at zero and at the Nyquist frequency of a real signal's transform are purely real, so a frame of $F = L/2 + 1$ non-redundant bins carries $2F - 2 = L$ real observations rather than $2F$, one per sample of the analysed frame.

Two neighbouring literatures are worth demarcating at this point. Sparse decomposition also selects among representations that all interpolate, basis pursuit being the canonical instance \cite{chen1999basispursuit}, but it presents the penalty as the objective, whereas the rule of Definition~\ref{def:rule} presents the same quantity as a fit residual. Model selection for interpolating models likewise has criteria of its own \cite{hodgkinson2023iic}, built to correct a saturated likelihood, while the object here is a residual carrying no likelihood at all.

\section{Free Per-Bin Deformation on Complex Spectra}
\label{sec:free}

\subsection{The Method Under Study}

Section~\ref{sec:selection} is now instantiated on the method of \cite{baelde2019thesis} so that its pathologies come out as consequences of Proposition~\ref{prop:degeneracy} rather than as observations on one implementation. The analysis is a short-time Fourier transform with a periodic Hann window at half-window hop; the window is part of the feature the dictionary is drawn from, so its parameters are stated with the experiments. Def-MAP is the residual selection rule of Definition~\ref{def:rule} for one particular couple $(\mathcal{T}, \Omega)$, and the two components have to be written out separately because each carries its own half of the failure.

The class is free in every frequency bin, and its action is not a complex multiplication: the real and imaginary channels are scaled by two independent real vectors,
\begin{equation}
T \mathbf{c} = \Re \mathbf{T} \cdot \Re \mathbf{c} + \mathrm{i}\, \Im \mathbf{T} \cdot \Im \mathbf{c}, \qquad \Re \mathbf{T}, \Im \mathbf{T} \in \Rp^F ,
\label{eq:tclass}
\end{equation}
with the products taken bin by bin. The regulariser penalises the deformation itself,
\begin{equation}
\Omega(\mathbf{T}_1, \mathbf{T}_2) = \sum_{i=1}^{2} \left\| \Re \mathbf{T}_i - \mathbf{1} \right\|^2 + \left\| \Im \mathbf{T}_i \right\|^2 ,
\label{eq:loss}
\end{equation}
the squared distance from the transform to a reference transform $\mathbf{T}^\circ$ whose real channel is one and whose imaginary channel is zero. In the first frame, where the prior over dictionary indices is uniform, nothing but \eqref{eq:loss} separates candidates.

\begin{remark}[the reference transform]
\label{rem:convention}
Written in $(\Re, \Im)$ coordinates, $\mathbf{T}^\circ = (\mathbf{1}, \mathbf{0})$ is the prior mean of the maximum a posteriori formulation \eqref{eq:loss} is the negative log of: one on the real channel means keep the atom's real part as it was learned, zero on the imaginary channel means no prior phase information. It is not the identity. Applied to $\mathbf{c}$ through \eqref{eq:tclass} it returns $\Re \mathbf{c}$, so the unpenalised deformation annihilates the candidate's imaginary part and the criterion scores each pair by its distance to discarding that part. The asymmetry between the two channels is an artefact of writing a complex spectrum as two real vectors and then placing the same isotropic prior on both, and Corollaries~\ref{cor:mask} and \ref{cor:loudness} are its consequences.
\end{remark}

\subsection{The Criterion in Closed Form}

\begin{proposition}
\label{prop:closedform}
Write, at a fixed bin whose index we drop, $a_i = \Re c_i$, $b_i = \Im c_i$, $a_x = \Re x$ and $b_x = \Im x$, and take the pair non-degenerate at that bin, $a_1^2 + a_2^2 > 0$ and $b_1^2 + b_2^2 > 0$. The optimal deformation and the criterion it induces are
\begin{equation}
\Re T_i = 1 + \frac{a_i\, \varepsilon}{a_1^2 + a_2^2}, \quad \varepsilon = a_x - a_1 - a_2, \quad \Im T_i = \frac{b_i\, b_x}{b_1^2 + b_2^2},
\label{eq:transform}
\end{equation}
\begin{equation}
\ell(\mathbf{c}_1, \mathbf{c}_2) = \sum_f \frac{\varepsilon[f]^2}{a_1[f]^2 + a_2[f]^2} + \sum_f \frac{b_x[f]^2}{b_1[f]^2 + b_2[f]^2},
\label{eq:closedform}
\end{equation}
and the induced estimates are $\Re \hat{s}_i = a_i + a_i^2 \varepsilon/(a_1^2 + a_2^2)$ and $\Im \hat{s}_i = b_i^2 b_x/(b_1^2 + b_2^2)$.
\end{proposition}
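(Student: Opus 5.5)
The plan is to observe that the problem in \eqref{eq:rule}, with the class \eqref{eq:tclass} and the regulariser \eqref{eq:loss}, decouples completely. Each bin separates from the others, and within a bin the real and imaginary channels separate as well. The reason is that the constraint $T_1\mathbf{c}_1 + T_2\mathbf{c}_2 = \mathbf{x}$, read through \eqref{eq:tclass}, splits into two real scalar equations per bin:
\begin{equation*}
\Re T_1\, a_1 + \Re T_2\, a_2 = a_x, \qquad \Im T_1\, b_1 + \Im T_2\, b_2 = b_x .
\end{equation*}
The first involves only the real-channel unknowns and the second only the imaginary-channel unknowns. Meanwhile \eqref{eq:loss} is a sum over bins and channels of separable squared terms. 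The global minimum is therefore the sum of independent minima, one per bin and per channel, and it suffices to solve two small problems at a fixed bin and then sum over $f$ to obtain \eqref{eq:closedform}.

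Each of these small problems is a minimum-norm solution of one linear equation in $\Rp^2$. For the real channel, I will substitute $u_i = \Re T_i - 1$. The constraint becomes $a_1 u_1 + a_2 u_2 = a_x - a_1 - a_2 = \varepsilon$, and the objective becomes $u_1^2 + u_2^2$. Because $a_1^2 + a_2^2 > 0$, the coefficient vector $(a_1,a_2)$ is nonzero, so the feasible set is a line. The closest point of that line to the origin is the projection $u_i = a_i\varepsilon/(a_1^2+a_2^2)$; one can verify this either by Lagrange multipliers or by Cauchy--Schwarz, using $\varepsilon^2 \le (a_1^2+a_2^2)(u_1^2+u_2^2)$, with equality exactly when $u$ is parallel to $a$. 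The optimal value is $\varepsilon^2/(a_1^2+a_2^2)$. The imaginary channel has no shift to make, because the reference value there is zero. Minimising $(\Im T_1)^2+(\Im T_2)^2$ subject to $b_1\Im T_1 + b_2 \Im T_2 = b_x$ gives $\Im T_i = b_i b_x/(b_1^2+b_2^2)$, with value $b_x^2/(b_1^2+b_2^2)$. This establishes \eqref{eq:transform}.

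Because the objective is strictly convex and the feasible set is affine, each minimiser is unique. The minimum in \eqref{eq:rule} is therefore attained at every non-degenerate pair, and the $T_i^\star$ are well defined, as Proposition~\ref{prop:degeneracy} requires. The estimates then follow by direct substitution into \eqref{eq:tclass}. The real part is $\Re\hat{s}_i = \Re T_i\, a_i = a_i + a_i^2\varepsilon/(a_1^2+a_2^2)$, and the imaginary part is $\Im\hat{s}_i = \Im T_i\, b_i = b_i^2 b_x/(b_1^2+b_2^2)$.

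The only step that needs care is the non-degeneracy hypothesis, together with the bins at zero and at Nyquist. If $a_1 = a_2 = 0$ at some bin, the real constraint is either infeasible (when $a_x \neq 0$, so $\ell = +\infty$) or vacuous; the same holds for the imaginary channel. The statement excludes these cases bin by bin, so I will state explicitly that the hypothesis is assumed at every bin entering the sums. At the DC and Nyquist bins, $b_x = b_i = 0$, so the imaginary condition $b_1^2+b_2^2>0$ fails there. I expect the intended reading is that these bins contribute only their real term, with the imaginary unknowns set to their reference value $0$ at zero cost. I would add a sentence restricting the second sum to bins where the imaginary channel is non-degenerate. Everything else is a routine projection computation.
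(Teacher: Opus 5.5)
Your proposal is correct and takes the same route as the paper: at each bin, each channel is an equality-constrained minimum-norm problem in two unknowns, shifted by the reference value $1$ on the real channel and unshifted on the imaginary one, and the per-bin penalties are then summed. Your additions are sound: the explicit decoupling, uniqueness by strict convexity, and the observation that $b_1^2 + b_2^2 > 0$ fails at the DC and Nyquist bins of a real signal, so that those bins contribute only their real term, with $\Im T_i = 0$ at zero cost. The last point is a genuine refinement that the paper's one-line summation leaves implicit, and it is consistent with the paper's count $q = 2F - 2$ and with the degenerate branch returning $\mathbf{T}^\circ$ that the paper describes after the corollaries.
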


\begin{IEEEproof}
Both channels are equality-constrained least squares in two unknowns. On the real channel, minimising $(T_1 - 1)^2 + (T_2 - 1)^2$ under $T_1 a_1 + T_2 a_2 = a_x$ gives $T_i - 1 = \lambda a_i$ with $\lambda = \varepsilon/(a_1^2 + a_2^2)$, hence the stated transform and a penalty $\lambda^2 (a_1^2 + a_2^2) = \varepsilon^2/(a_1^2 + a_2^2)$. On the imaginary channel the target is zero, so minimising $T_1^2 + T_2^2$ under $T_1 b_1 + T_2 b_2 = b_x$ gives $T_i = b_i b_x/(b_1^2 + b_2^2)$ and a penalty $b_x^2/(b_1^2 + b_2^2)$. Summing over bins and sources gives \eqref{eq:closedform}.
\end{IEEEproof}

\subsection{Degeneracy of the Criterion}

\begin{corollary}[exact interpolation]
\label{cor:interp}
At every bin where the candidate pair is non-degenerate, the two estimates sum back to the observation, $\hat{\mathbf{s}}_1 + \hat{\mathbf{s}}_2 = \mathbf{x}$, in both channels. The degenerate bins are those where $a_1^2 + a_2^2$ or $b_1^2 + b_2^2$ vanishes.
\end{corollary}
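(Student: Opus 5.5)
The plan is to read the corollary directly off the closed form of Proposition~\ref{prop:closedform}, bin by bin and channel by channel, and only then tie it back to Proposition~\ref{prop:degeneracy}. At a fixed bin where the pair is non-degenerate, I will sum the two induced real parts,
\begin{equation*}
\Re \hat{s}_1 + \Re \hat{s}_2 = a_1 + a_2 + \frac{(a_1^2 + a_2^2)\,\varepsilon}{a_1^2 + a_2^2} = a_1 + a_2 + \varepsilon = a_x ,
\end{equation*}
using the definition $\varepsilon = a_x - a_1 - a_2$. In the same way I will sum the imaginary parts, $\Im \hat{s}_1 + \Im \hat{s}_2 = (b_1^2 + b_2^2)\, b_x/(b_1^2 + b_2^2) = b_x$. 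Combining the two channels gives $\hat{s}_1 + \hat{s}_2 = x$ at that bin.

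An equivalent and more structural route is to note that the transforms \eqref{eq:transform} are, by the proof of Proposition~\ref{prop:closedform}, minimisers of an equality-constrained problem. They therefore satisfy the constraints $\Re T_1 a_1 + \Re T_2 a_2 = a_x$ and $\Im T_1 b_1 + \Im T_2 b_2 = b_x$. Substituting them into \eqref{eq:tclass} then gives the identity exactly as in the last line of the proof of Proposition~\ref{prop:degeneracy}. I will present the explicit sum as the check and cite the constraint argument as the reason it must hold.

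For the characterisation of the degenerate bins, I will argue that each channel's constraint is a single linear equation in two real unknowns. Its coefficient row is $(a_1, a_2)$ on the real channel and $(b_1, b_2)$ on the imaginary one. The row has full rank exactly when $a_1^2 + a_2^2 > 0$, respectively $b_1^2 + b_2^2 > 0$. This ties the statement to the full-row-rank caution in Section~\ref{sec:selection}.

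When the relevant sum of squares vanishes, the constraint reads $0 = a_x$ (or $0 = b_x$). It is then infeasible unless the observation also vanishes in that channel, and in that case the closed forms are undefined. These are therefore precisely the bins excluded from the identity.

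The only delicate point is this degenerate case, since the statement names it but does not claim an identity there. I will confine myself to showing that those are the only bins where the closed form, and hence the argument, breaks down. I will not attempt to say what the implementation returns there, which the paper defers to later discussion.
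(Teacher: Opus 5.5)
Your proposal is correct and follows the paper's own reasoning. The corollary is read off the induced estimates of Proposition~\ref{prop:closedform}, whose weights $a_i^2/(a_1^2+a_2^2)$ and $b_i^2/(b_1^2+b_2^2)$ sum to one; equivalently it is the constraint-satisfaction identity of Proposition~\ref{prop:degeneracy}. Your rank reading of the degenerate bins matches the paper's later remark that the full-row-rank condition fails only where a pair is silent.

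One small precision: at a feasible degenerate bin the identity still holds trivially. An example is the imaginary channel at the zero and Nyquist bins, where $b_1=b_2=b_x=0$ for every pair. Those bins are therefore outside the statement, not places where the identity fails.
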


The count is the criterion of Section~\ref{sec:selection}, and it is immediate here: each transform of \eqref{eq:tclass} carries $p = 2F$ real parameters, so a pair carries $2p = 4F$ of them against $q = 2F - 2 = L$ independent real constraints, twice as many free parameters as the frame carries real observations, and the constraint set is a non-empty affine set of dimension at least $2F + 2$ wherever the constraint matrix has full row rank, so $\mathcal{T}$ interpolates for every dictionary and every observation. Proposition~\ref{prop:degeneracy} therefore applies verbatim, with the rank condition failing only on the bins where a candidate pair is silent, a case treated at the end of this subsection. What $\ell$ measures is not how well a pair explains the mixture, since every pair explains it perfectly, but how far the required correction sits from $\mathbf{T}^\circ$, normalised by the candidates' own energy. The two corollaries that follow are the specialisation of Proposition~\ref{prop:degeneracy} to this $\Omega$: each names one thing that \eqref{eq:loss} rewards in the absence of any data term.

\begin{corollary}[the imaginary channel is a mask]
\label{cor:mask}
$\Im \hat{s}_i$ depends on the candidates only through their imaginary energies $b_i^2$, so it is a ratio mask applied to $\Im x$, it always takes the sign of $\Im x$, and it discards every other feature of the dictionary's phase.
\end{corollary}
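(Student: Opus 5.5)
The corollary is a direct reading of the closed form in Proposition~\ref{prop:closedform}, so the plan is to start from $\Im \hat{s}_i = b_i^2 b_x/(b_1^2+b_2^2)$ at a fixed non-degenerate bin ($b_1^2+b_2^2>0$) and derive each of the three claims from this expression alone.

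\emph{Dependence on the candidates.} First, I will check which quantities the expression contains. Apart from the observation channel $b_x=\Im x$, it involves only $b_1^2$ and $b_2^2$. It contains neither the real parts $a_i$ nor the signs of $b_i$. Hence replacing $b_i$ by $-b_i$, or changing any real part, leaves $\Im\hat{s}_i$ unchanged. This is the "depends only through imaginary energies" clause, and it also yields the final clause: every feature of the atom's phase other than $|\Im c_i|$ has no influence on the output.

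\emph{Mask structure.} Next, I will define $m_i = b_i^2/(b_1^2+b_2^2)$. This coefficient is real and satisfies $0\le m_i\le 1$, and the two coefficients obey $m_1+m_2=1$. The estimate then reads $\Im\hat{s}_i = m_i\,\Im x$. This has exactly the form of a ratio mask applied bin by bin to $\Im x$, with the energy ratio of the candidates' imaginary channels playing the role of the mask. The summation $m_1+m_2=1$ is consistent with the interpolation identity in Corollary~\ref{cor:interp}.

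\emph{Sign.} Since $m_i\ge 0$, $\operatorname{sign}\Im\hat{s}_i = \operatorname{sign}\Im x$ whenever $m_i>0$. When $b_i=0$ the estimate is zero, so I will read "takes the sign" in the weak sense of never having the opposite sign.

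The only real obstacle is bookkeeping at degenerate bins, where $b_1^2+b_2^2=0$ and the formula is undefined. I would restrict the statement to non-degenerate bins, as Proposition~\ref{prop:closedform} does, and defer those bins to the treatment the paper announces at the end of the subsection.
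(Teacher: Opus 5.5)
Your proposal is correct and follows the paper's route. The paper gives no separate proof of this corollary and reads it directly off the estimate $\Im \hat{s}_i = b_i^2 b_x/(b_1^2+b_2^2)$ of Proposition~\ref{prop:closedform}, exactly as you do. Two small refinements are worth making. First, the clause on discarded phase features concerns $\Im \hat{s}_i$ alone, since the real channel still depends on $a_i$, sign included. Second, the degenerate bins need no deferral: $b_1 = b_2 = 0$ gives $\Im \hat{s}_i = \Im T_i\, b_i = 0$ whatever transform is returned, so the claims hold there trivially.
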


Half of the method's output is therefore a mask on the mixture, which is why the ceiling of \cite{baelde2026ceiling} applies to it.

\begin{corollary}[loudness bias]
\label{cor:loudness}
The imaginary term of \eqref{eq:closedform} has a numerator independent of the candidates, so at each bin the term decreases with the pair's imaginary energy $b_1[f]^2 + b_2[f]^2$, weighted by the imaginary energy the mixture holds there: it prefers the loudest atoms, whatever their shape, and two pairs of equal total energy separate only through where the mixture puts its own.
\end{corollary}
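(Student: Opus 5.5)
The argument reads Corollary~\ref{cor:loudness} directly off the closed form \eqref{eq:closedform} of Proposition~\ref{prop:closedform}, bin by bin and with the bin index dropped.

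First I isolate the imaginary term $g(E) = b_x^2/E$, where $E = b_1^2 + b_2^2 > 0$ is the pair's imaginary energy. The numerator $b_x^2$ depends only on the observation, so it is the same for every candidate pair. Whenever $b_x \neq 0$, $g$ is strictly decreasing in $E$; this follows from $g'(E) = -b_x^2/E^2 < 0$, or simply from the monotonicity of $1/E$. The factor $b_x^2$ multiplies the whole term, which is the weighting by the mixture's imaginary energy at that bin. Where $b_x = 0$ the term vanishes for every pair, so that bin contributes nothing to the comparison.

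Second, I establish ``whatever their shape''. The term sees the candidates only through the scalar $E$, which is the same observation already made in Corollary~\ref{cor:mask}. Any two pairs with the same imaginary energy profile $\{b_1[f]^2 + b_2[f]^2\}_f$ therefore receive identical imaginary penalties, whatever their signs, their phases, or how the energy is split between the two sources. Scaling both imaginary parts of a pair by $\alpha > 1$ divides every term by $\alpha^2$, so at fixed shape the louder pair always scores lower.

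Third, I compare two pairs $P$ and $P'$ with equal total energy $\sum_f E[f] = \sum_f E'[f]$. Their imaginary scores differ by $\sum_f b_x[f]^2\,(1/E[f] - 1/E'[f])$. This difference depends on where each pair places its energy, and through the weights $b_x[f]^2$ it depends on where the mixture places its own. No other feature of the candidates enters.

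The only delicate step is the scope of the claim. The statement concerns the imaginary term alone. The real term $\varepsilon^2/(a_1^2+a_2^2)$ is not monotone in loudness, because $\varepsilon = a_x - a_1 - a_2$ also moves with the candidates. So I state the result for the imaginary term and note that it is the part of $\ell$ that nothing else in the rule corrects. For degenerate bins with $E = 0$, I defer to the treatment at the end of this subsection, or equivalently take the term to be $+\infty$, which is consistent with monotone decrease.
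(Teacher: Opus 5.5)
Your proof is correct and follows the paper's own route. The paper gives no separate proof beyond the observation built into the statement, that the imaginary term $b_x^2/(b_1^2+b_2^2)$ of \eqref{eq:closedform} has a candidate-independent numerator, and your three steps unpack it faithfully; your third step is, if anything, more precise than the statement's wording, since $\sum_f b_x[f]^2(1/E[f]-1/E'[f])$ also depends on how each pair spreads its energy across bins.

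One correction to your closing remark: the degenerate treatment at the end of the subsection is not equivalent to setting the term to $+\infty$, it is the opposite. For a pair silent at a bin, the degenerate branch returns $\mathbf{T}^\circ$ and that bin contributes exactly zero, the global minimum, whatever the mixture holds there. Monotone decrease therefore fails at $E=0$, which is the discontinuity the paper points out. $+\infty$ is only what the empty-constraint convention of Definition~\ref{def:rule} would give when $b_x \neq 0$, not what the method does, so keep the corollary confined to non-degenerate bins, as Proposition~\ref{prop:closedform} does.
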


The real term is normalised by $a_1^2 + a_2^2$ as well, so a given absolute additivity defect is attenuated in proportion to the candidates' energy.

One case falls outside Proposition~\ref{prop:closedform} and is exactly where the rank condition of the count above fails. A pair silent at a bin makes both denominators vanish, the degenerate branch returns $\mathbf{T}^\circ$, and that bin contributes exactly zero, the global minimum, whatever the mixture holds there, whereas a pair of small but non-zero energy contributes a diverging penalty. The criterion is therefore discontinuous in the dictionary at its own optimum, and the dictionaries below are energy-filtered so that this branch is never the one being measured.

Two consequences of the closed form yield quantitative predictions, which Section~\ref{sec:results} tests. First, Corollary~\ref{cor:interp} couples the two sources through a single error signal, so the distance between the criterion and an oracle taken inside the same class must be equal on both sources, and there is no reason for that equality to survive a class that stops interpolating. Second, the identity $\hat{\mathbf{s}}_1 + \hat{\mathbf{s}}_2 = \mathbf{x}$ fixes the difference in reconstruction quality between the two sources independently of the dictionary's content, so a repaired rule that restores exact summation must restore the same constant. Both predictions are stated before the experiments and can be refuted by them.

On complexity, the rule of Definition~\ref{def:rule} under this class evaluates every pair and solves \eqref{eq:transform} over every bin of each, hence $O(N^2F)$ with an interpreted solve in the inner loop. Section~\ref{sec:repair} keeps the first factor and removes the second, then removes the pair search itself.

\section{Separating Selection From Reconstruction}
\label{sec:repair}

The rule proposed here runs in two stages, which must not share a deformation class. The first stage \emph{selects}: it ranks candidate pairs of atoms under a rigid class, one complex gain and one pure delay per atom, three real parameters, poor enough that the residual it leaves still measures fit. The second stage \emph{reconstructs}: on the atoms the first stage retained, it fits the complex gains of the $k$ best-aligned atoms of each source jointly by a single least squares, a class far richer than the one that ranked. What the paper proposes is the pair of stages. Where the tables below report the rigid class on its own, they report the selection stage asked to reconstruct as well, the control point of Section~\ref{sec:local} rather than a competing rule.
\subsection{Parameter Counts of the Two Stages}
\label{sec:principle}

Proposition~\ref{prop:degeneracy} is a statement about an interpolating class, so it also says what to do: make the class that ranks unable to interpolate. Selection needs a class small enough that the fit residual measures fit, and reconstruction a class large enough to be accurate, so a single class serving both roles is misspecified.

\begin{proposition}[non-degeneracy by parameter count]
\label{prop:count}
Let $\mathcal{T}$ be parametrised by $p$ real numbers per transform, with $T\mathbf{c}$ locally Lipschitz in those parameters, which holds in particular when it is $C^1$, and let $q$ be the number of independent real constraints imposed by $T_1\mathbf{c}_1 + T_2\mathbf{c}_2 = \mathbf{x}$. If $2p < q$ then $\mathcal{M}(\mathbf{c}_1, \mathbf{c}_2)$ is the image of $\Rp^{2p}$ under a locally Lipschitz map into an ambient space of dimension $q$, so its Hausdorff dimension is at most $2p < q$ and it has Lebesgue measure zero and empty interior in that space. The set of observations at which $\mathcal{T}$ interpolates is then negligible, the hypothesis of Proposition~\ref{prop:degeneracy} fails at almost every observation, and the residual of \eqref{eq:rule} recovers a data term.
\end{proposition}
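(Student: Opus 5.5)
The plan is to realise the attainable set as the image of a parameter space and then apply the standard fact that a Lipschitz map cannot raise Hausdorff dimension. Identify the ambient space of the constraint with $\Rp^q$ through a fixed linear isomorphism. Concretely, stack the real and imaginary parts of the bins and delete the identically zero imaginary parts at the zero and Nyquist bins, which is the reduction to $q$ independent real coordinates described after Proposition~\ref{prop:degeneracy}. At a fixed pair $(\mathbf{c}_1,\mathbf{c}_2)$, define
\[
\Phi : \Rp^{p}\times\Rp^{p} \to \Rp^q, \qquad (\theta_1,\theta_2)\mapsto T_{\theta_1}\mathbf{c}_1 + T_{\theta_2}\mathbf{c}_2 .
\]
By construction $\mathcal{M}(\mathbf{c}_1,\mathbf{c}_2) = \Phi(\Rp^{2p})$. $\Phi$ is locally Lipschitz, because each summand is locally Lipschitz in its own parameters by hypothesis and the linear identification preserves this. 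If $\mathcal{T}$ is $C^1$, the mean value inequality on compact balls supplies the local constants.

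Next, cover $\Rp^{2p}$ by countably many closed balls $B_k$, on each of which $\Phi$ is Lipschitz with some constant $L_k$. On $B_k$, the $s$-dimensional Hausdorff measure satisfies $\mathcal{H}^s(\Phi(B_k)) \le L_k^s\,\mathcal{H}^s(B_k)$. For any $s > 2p$ the right-hand side is zero, since $B_k$ has Hausdorff dimension $2p$. Countable subadditivity then gives $\mathcal{H}^s(\mathcal{M}) = 0$ for every $s > 2p$, so $\dim_H \mathcal{M} \le 2p$. Because $2p < q$, taking $s = q$ gives $\mathcal{H}^q(\mathcal{M}) = 0$. Since $\mathcal{H}^q$ is a constant multiple of Lebesgue measure on $\Rp^q$, $\mathcal{M}$ is Lebesgue-null. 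A null set cannot contain an open ball, so $\mathcal{M}$ also has empty interior.

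For the selection consequences, the set of observations at which $\mathcal{T}$ interpolates at a given pair is exactly $\mathcal{M}(\mathbf{c}_1,\mathbf{c}_2)$. The set at which it interpolates on all of $\mathcal{C}_1\times\mathcal{C}_2$ is the intersection over the $N^2$ pairs, and the set at which it interpolates at even one pair is the finite union. Both are null. Hence, for Lebesgue-almost every $\mathbf{x}$, every pair has an empty constraint set in \eqref{eq:rule}. The hypothesis of Proposition~\ref{prop:degeneracy} therefore fails, and the exact-summation identity is lost.

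The phrase ``recovers a data term'' is the step I expect to need the most care, since in the constrained form \eqref{eq:rule} every pair then scores $+\infty$. I would make it precise through the penalised form discussed after Definition~\ref{def:rule}. There the score of a pair is at least the squared distance $\operatorname{dist}(\mathbf{x},\mathcal{M}(\mathbf{c}_1,\mathbf{c}_2))^2$, weighted by the data term. This distance depends on $\mathbf{x}$ and is positive at almost every observation whenever $\mathcal{M}$ is closed. In general I would use the closure $\overline{\mathcal{M}}$: it is the closure of a countable union of Lipschitz images of compact balls, and I would need to check that it also has empty interior. The residual then differs across pairs according to the data and no longer reduces to $\Omega$ alone.
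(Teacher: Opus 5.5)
Your argument is correct and follows the paper's own route, which the paper compresses into the statement itself: $\mathcal{M}(\mathbf{c}_1,\mathbf{c}_2)$ is a locally Lipschitz image of $\Rp^{2p}$, so its Hausdorff dimension is at most $2p<q$, so it is Lebesgue-null with empty interior and interpolation fails at almost every observation; your countable cover by balls, the comparison of $\mathcal{H}^q$ with Lebesgue measure and the finite union over the $N^2$ pairs supply the details the paper leaves implicit. The paper asserts ``recovers a data term'' only informally, so your closure step is not needed, and as phrased it would not work: empty interior of $\overline{\mathcal{M}}$ does not make $\operatorname{dist}(\mathbf{x},\mathcal{M})$ positive almost everywhere, which would require $\overline{\mathcal{M}}$ to be null, and that can fail, for example for a locally Lipschitz parametrisation with dense image; it is enough to note that wherever the penalised minimum is attained, its residual is nonzero because $\mathbf{x}\notin\mathcal{M}$.
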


The proposition converts the design problem into an inequality, and the two classes below sit on either side of it. The rigid class of Section~\ref{sec:rigid} carries three real parameters per candidate, so $2p = 6$ against $q = 2F - 2$, and it selects. The rich class of Section~\ref{sec:local} carries $2k$ real parameters per source and never selects anything; it only rebuilds, on candidates the rigid class has already ranked. Def-MAP's own class has $2p = 4F$ against the same $q$, hence the reverse inequality.

\subsection{Rigid Class: Gain and Pure Delay}
\label{sec:rigid}

The physical statement is that a library exemplar differs from the source's actual frame mainly by an unknown sub-frame time offset and a level difference. A pure delay is a phase ramp linear in frequency, so the selection class is
\begin{equation}
c[f] \longmapsto g \cdot c[f]\, e^{-2\mathrm{i}\pi f \tau / L}, \qquad g \in \Cp, \ \tau \in \Rp,
\label{eq:ramp}
\end{equation}
three real parameters against the $2F - 2$ constraints, which is Proposition~\ref{prop:count} satisfied with a wide margin. The delay is estimated by the peak of the cross-correlation $\mathrm{irfft}(\overline{\mathbf{c}}\,\mathbf{x})$, refined to sub-sample precision by a parabolic fit around that peak \cite{knapp1976gcc}, and capped. The cap follows from the model and is not a free hyperparameter: on a windowed frame the ramp is a circular shift, which only approximates a delay, and the approximation holds while the wrapped tail sits under the window's near-zero edges. Given the aligned candidates, the complex gains of a pair follow from a $2 \times 2$ Hermitian normal equation solved in closed form, regularised by a ridge proportional to its trace so that two nearly collinear candidates keep small finite gains instead of a blow-up that would win the argmin on numerical noise. Selection is the residual of that fit, and it now measures misfit because the class cannot interpolate. Proposition~\ref{prop:count} concerns the exact minimiser of \eqref{eq:rule} over this class, whereas the deployed estimator approximates it, the delay by a capped parabolic peak and the gains by a ridge-penalised solve; what the guarantee buys is the non-degeneracy of the class being searched, the quality of the search itself being measured below rather than bounded. Fig.~\ref{fig:ramp} shows the three stages on one atom.

\begin{figure}[!t]
\centering
\includegraphics[width=\columnwidth]{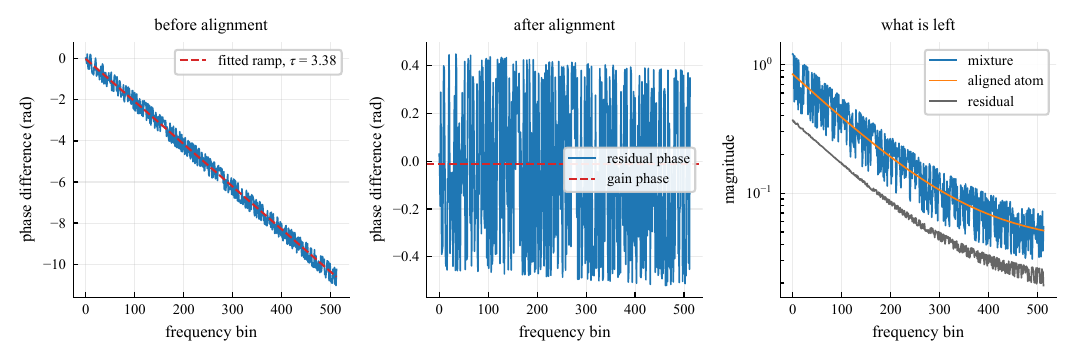}
\caption{The alignment step, on a synthetic atom of known delay and gain. Left, the phase of the atom against that of the mixture; middle, the same once the ramp \eqref{eq:ramp} is removed; right, the residual left after gain and delay are removed exactly. That residual is what the rigid class cannot explain, and what makes the criterion informative.}
\label{fig:ramp}
\end{figure}

\subsection{Rich Class: Local Combination of Aligned Atoms}
\label{sec:local}

For reconstruction we drop the pair search entirely. We align every atom of both dictionaries onto the mixture, keep the $k$ best per source, and fit all $k_1 + k_2$ complex gains jointly by a single least squares,
\begin{equation}
\hat{\mathbf{g}} = (\mathbf{A}^{*}\mathbf{A} + \rho \mathbf{I})^{-1} \mathbf{A}^{*} \mathbf{x}, \qquad \hat{\mathbf{s}}_i = \mathbf{A}_i \hat{\mathbf{g}}_i,
\label{eq:joint}
\end{equation}
with $\mathbf{A}$ the matrix of selected aligned atoms and $\rho$ the ridge of Section~\ref{sec:rigid}. At $k = 1$ the fit coincides with the pair rule of Section~\ref{sec:rigid} on the same two atoms, the joint solve reducing to the same $2\times2$ normal equations, so the sweep over $k$ starts from a control point instead of a plausible-looking curve. What differs at $k = 1$ is the selection alone, an individual matched filter here against the argmin of the joint pair residual there.

Two properties of the class matter below. The selection score is the energy-normalised matched filter $|\mathbf{c}^{*}\mathbf{x}|/\|\mathbf{c}\|$, and the normalisation is the direct countermeasure to Corollary~\ref{cor:loudness}, which is what an unnormalised score reduces to; selecting greedily by that score without deflating the mixture between picks is the first step of matching pursuit \cite{mallat1993mp} without its later ones. And the estimate of a source lives in the complex span of its selected atoms, so the ceiling of the class at a given $k$ is the projection of the true source onto a $k$-dimensional subspace. Being a projection, the capacity line of the experiments, measured with atoms aligned on the truth and chosen against it, is non-decreasing in $k$ and in dictionary size by construction, and since the subspace is chosen greedily by individual correlation rather than as the optimal $k$-subset, the measured capacity is itself a lower bound of the class ceiling.

This turns the comparison with per-bin freedom into a measurement. The local class carries $2k$ complex gains, that is $4k$ real parameters per pair, against the $4F$ the free deformation grants itself, so if a small $k$ already reaches the capacity of the free class, per-bin freedom adds no capacity that a few aligned atoms do not already provide, while leaving the criterion degenerate. Section~\ref{sec:results} measures the two capacities against each other, source by source.

\subsection{Partial Reabsorption of the Residual}
\label{sec:dial}

The joint fit leaves an unexplained residual $\mathbf{r} = \mathbf{x} - \hat{\mathbf{s}}_1 - \hat{\mathbf{s}}_2$, since the rigid class no longer interpolates. A reabsorption coefficient $\alpha \in [0, 1]$ returns a fraction of it to the estimates, split per bin by the estimates' own energies,
\begin{equation}
\hat{s}_i^{(\alpha)}[f] = \hat{s}_i[f] + \alpha\, \frac{|\hat{s}_i[f]|^2}{|\hat{s}_1[f]|^2 + |\hat{s}_2[f]|^2}\, r[f] ,
\label{eq:dial}
\end{equation}
so $\alpha = 0$ is the pure model, $\alpha = 1$ restores estimates that sum exactly to the mixture, that is, Def-MAP's own structure, and selection is always performed at $\alpha = 0$, so the model that ranks stays rigid even when the model that rebuilds does not.

One property of \eqref{eq:dial} bounds what it can be claimed to prove. The added term is a real per-bin gain applied to the residual, weighted by the model's own energy shares, hence itself of masking type: the improvement it brings is of the kind the ceiling of \cite{baelde2026ceiling} bounds, even though the composite estimate is not a mask on the mixture and that ceiling does not formally bound it. Only the $\alpha = 0$ column can therefore support a claim about the exemplar model itself, a reabsorbed residual carrying information the dictionary never had, and every such claim is read on that column below. The degeneracy and its repair do not rest on this coefficient at all.

\subsection{Selection Against the Mixture}
\label{sec:lock}

Selection ranks atoms by correlation with the mixture, and the mixture is not the target. Expanding the score of a candidate of the first source,
\begin{equation}
\langle \mathbf{c}, \mathbf{x} \rangle = \langle \mathbf{c}, \mathbf{s}_1 \rangle + \langle \mathbf{c}, \mathbf{s}_2 \rangle ,
\label{eq:crossterm}
\end{equation}
the ranking is the right criterion plus a cross-term that vanishes only if $\mathcal{C}_1$ is orthogonal to the second source, which no realistic pair of dictionaries is. So the rule prefers atoms that explain the mixture over atoms that resemble the source, and the preference compounds with $k$: as $k$ grows the selected span converges toward the best $k$-dimensional subspace for $\mathbf{x}$, not for $\mathbf{s}_1$, so the gap to capacity must widen with $k$. Section~\ref{sec:results} tests that prediction on both of its axes.

The lightest remedy, and the closest to the diagnosis, is a joint criterion over both sources: choose $(\mathbf{A}_1, \mathbf{A}_2)$ to explain $\mathbf{x}$ while penalising configurations in which the two spans overlap, that overlap being the mechanism by which explaining the mixture twice beats resembling either source once. A penalty on the principal angles between $\spn \mathbf{A}_1$ and $\spn \mathbf{A}_2$, or equivalently on $\|\mathbf{A}_1^{*}\mathbf{A}_2\|$, expresses that directly and leaves the inference structure untouched. Two approximations of it were measured on the protocol below, one discounting each atom's score by its coherence with the other source's dictionary and one selecting both sets jointly by deflation: at $k = 16$ and 300 atoms they raise the vocals from $+5.57$ to $+6.20$ and $+6.54$~dB against a capacity of $+12.99$, which closes 8 to 13 per cent of the lock and leaves it open.
\subsection{Complexity of the Two Rules}
\label{sec:cost}

The original method evaluates every pair: $N_1N_2$ candidates, each requiring the closed-form deformation over $F$ bins, hence $O(N^2F)$ with a per-pair call in the inner loop. The rigid pair rule of Section~\ref{sec:rigid} aligns the dictionary once, one grouped inverse FFT, $O(NF\log F)$ and linear in $N$. It then scores every pair through the $2\times2$ normal equations, which reduce to one Gram product $\mathbf{A}_1^{*}\mathbf{A}_2$ and are therefore $O(N^2F)$ in BLAS instead of $O(N^2)$ interpreted calls over $F$ bins. Its gain over the original is a large constant factor at the same order. Only the local combination of Section~\ref{sec:local} changes the order, because it drops the pair search entirely: alignment, then a single joint solve of size $k_1 + k_2$, that is $O(k^2F + k^3)$, the Gram of \eqref{eq:joint} dominating, with $k$ a small constant independent of the dictionary, hence $O(NF\log F)$ overall. The local combination is therefore \emph{faster} than the pair rule it replaces, and the accuracy parameter $k$ does not drive the complexity.

\section{Experimental Evaluation}
\label{sec:results}

\subsection{Protocol}
\label{sec:protocol}

The measurement layer is shared with the companion paper \cite{baelde2026ceiling}, so the numbers of the two are directly comparable.

\emph{Corpus, dictionary, splits.} MUSDB18 \cite{musdb18} for both papers of the pair, the decision being comparability rather than difficulty. The two sources are vocals and accompaniment. The dictionary is built from the training tracks with a per-track quota so that no single track dominates the pool, and test material is held out at the track level. Dictionary draws are nested, the atoms of the 50-atom cell being the first 50 of the 300-atom draw, so the trend against dictionary size is read on nested pools rather than on independent draws and cannot be an artefact of the sampling. Each test track contributes one five-second excerpt, taken at the head of the track so that no draw and no seed enters the choice, and every figure below is a paired mean over the fifty test tracks. The sensitivity of the results to the position of that excerpt is reported in Section~\ref{sec:res:diagnosis}.

\emph{Analysis and parameters.} Signals are taken at 44.1~kHz and analysed on frames of $L = 1024$ samples with hop $L/2$ and the periodic Hann window of Section~\ref{sec:free}, used again for synthesis, so $F = 513$ bins and one frame lasts 23~ms. The delay cap of Section~\ref{sec:rigid} is $\tau_{\max} = 128$ samples, 2.9~ms, and the ridge of \eqref{eq:joint} is $10^{-8}$ times the trace of the Gram. Dictionaries hold 50, 100 or 300 atoms per source, the local class is swept at $k \in \{1, 2, 4, 8, 16\}$, and $\alpha$ in \eqref{eq:dial} is swept over $[0, 1]$ in steps of $0.25$. No parameter is tuned on the test split: the two of them that could be, $k$ and $\alpha$, are reported as sweeps and never as a selected value.

\emph{Metric.} Time-domain SDR on the resynthesised waveform \cite{vincent2006bss,leroux2019sdr}. Estimates are inverted through the analysis window's own windowed overlap-add reconstruction and the first and last frame lengths are trimmed before scoring, which is not cosmetic: without that trim the round trip's edges dominate the error and even the oracles read negative. Two points depart from the campaign convention \cite{stoter2018sisec}: no distortion filter is fitted, the time-invariant filters of BSS Eval v4 absorbing the gain and phase errors measured here, and scoring is on one excerpt per track, an effect bounded in Section~\ref{sec:res:diagnosis}.

\emph{References on every row.} Five references are computed for every measurement, on the same track and the same excerpt, oracle references of this kind being the standard instrument for bounding what a class can do independently of any estimator \cite{vincent2007oracle}: the mixture taken as its own estimate, which is the floor; the ideal ratio mask; the oracle Wiener filter; the best real mask $\mstar[f] = \Re(s\overline{x})/|x|^2$, which is the true ceiling of the masking class \cite{irmbest,baelde2026ceiling}; and its clipped variant. Its distance to the two usual references is measured on this protocol rather than imported, and reported in Section~\ref{sec:level}. Every row reports two columns, the gain over the mixture $\dmix$ and the distance to $\mstar$, written $\dstar$, the second being the one that supports a claim about the model class. Table~\ref{tab:main} prints the two a level is read against, the ideal ratio mask and $\mstar$; the mixture is the origin of $\dmix$, and the oracle Wiener filter and the clipped mask are recorded with every cell without being printed.
\emph{Pairing, exclusions, reporting.} MUSDB tracks differ in difficulty by far more than the margins being measured, so every mean is paired track by track, and a rule whose reference is missing on a track is dropped instead of compared against a different one. Excerpts where one stem is silent under the other pose no separation problem and carry an infinite ceiling, one of them displacing a mean by hundreds of decibels, so they are excluded above a 60~dB threshold on the mixture's own source-to-source ratio, both sources at once so that the pairing stays symmetric. The oracle-to-criterion gap needs no separate estimator, two rules of one cell being averaged over the same tracks, so the difference of their columns is already the paired mean of the per-track gaps, as are the ceiling lines of every figure. Every cell below is read back from one record per (track, dictionary size, rule, source).

\subsection{Separating the Rule From the Dictionary}
\label{sec:res:diagnosis}

Proposition~\ref{prop:degeneracy} says the criterion of Section~\ref{sec:free} cannot rank; it does not say the dictionary is adequate, and the two explanations call for opposite repairs. We separate them by running several rules over the \emph{same} candidate pairs on held-out material: the method's own criterion, a phase-blind rule on magnitude additivity, an oracle minimising the true reconstruction error inside that pool, and the mask ceiling above. The oracle replays the criterion's own fitted transforms and oracles the choice of pair alone, so its level is that of the same reconstruction class scored under a perfect selector, not that of a deformation fitted against the truth, which under a class free in every bin would reach the truth itself. A large oracle-to-criterion gap points to the rule; a small gap at a low absolute level would point to the dictionary. One property of the setup cuts the same way: the candidate pool is flat whereas the original method indexes its library by (sound, frame), so the oracle measured here upper-bounds the oracle available to that method.

Both readings point to the rule, as Fig.~\ref{fig:diagnostic} shows. The gap is 6.23, 6.57 and 7.70~dB at 50, 100 and 300 atoms per source. The criterion's own quality peaks at a hundred atoms and ends below its own 50-atom value, $+5.84$, $+6.07$ then $+5.71$~dB above the mixture, while its oracle rises monotonically from $+12.07$ to $+13.41$~dB. The phase-blind rule on magnitude additivity, run over the same pairs, leaves a gap of 6.11, 6.34 and 6.43~dB, so discarding the phase of the dictionary altogether costs nothing the criterion has: the two rules are apart by less than the gap either of them leaves. The defect therefore lies in the rule. Nor does the diagnosis depend on where the excerpt is taken: sliding it by thirty seconds either way, on the tracks long enough to allow it, widens the gaps rather than narrowing them, under the free class from 6.2--7.7 to 7.4--9.3~dB equally on both sources, as Corollary~\ref{cor:interp} requires of any rule of that class, and under the rigid class of Section~\ref{sec:rigid} from 2.1--2.8 to 2.6--5.0~dB on vocals, the accompaniment gap of that class moving by less than a tenth of a decibel.

\begin{figure*}[!t]
\centering
\includegraphics[width=0.80\textwidth]{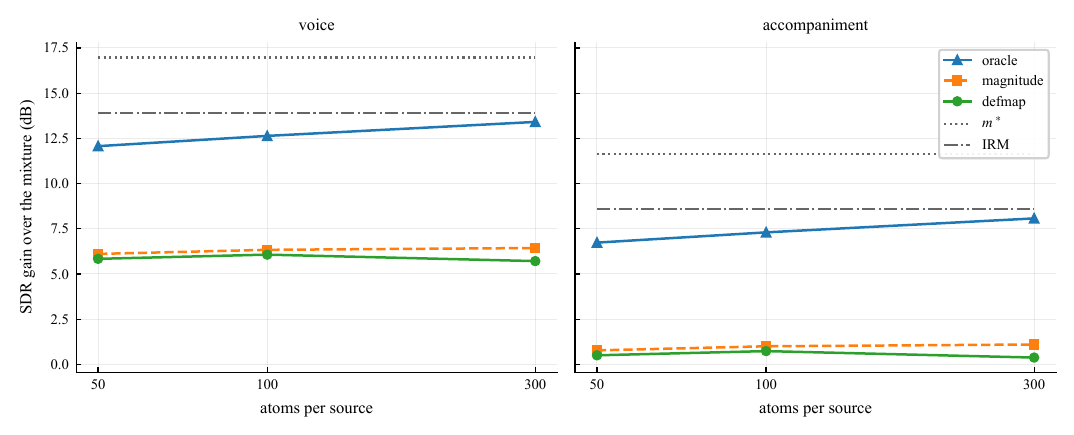}
\caption{The diagnosis. Gain over the mixture against dictionary size, both sources, for the original criterion and for an oracle picking the pair that minimises the true error in the same candidate pool. The criterion degrades as the dictionary grows while its own oracle improves, which no rule short of capacity does, and the gap is the same on both sources, as Corollary~\ref{cor:interp} predicts. Dashed lines are the derived ceilings, $\mstar$ and the ideal ratio mask, on the tracks each rule was measured on.}
\label{fig:diagnostic}
\end{figure*}

The two predictions announced at the end of Section~\ref{sec:free} are verified on the same runs. The gap is symmetric between sources to the decimal, which is the single error signal of Corollary~\ref{cor:interp} showing through; under the rigid class, which does not interpolate, it is asymmetric by a factor of three and a half to four and a half. And the reabsorption of Section~\ref{sec:dial} at $\alpha = 1$, which restores exact summation, fixes the difference in SDR between the two sources at 5.34~dB, constant to 0.01~dB across the three dictionary sizes, against 3.67 to 3.94~dB at $\alpha = 0$ and 2.0 to 2.8~dB for rules whose atoms are chosen against the truth.

\subsection{Effect of the Selection Class}

The repair of Section~\ref{sec:rigid} does not close the selection problem, it reduces it and changes its shape, and both effects are measured. Under the original criterion the oracle-to-criterion gap is the 6.23 to 7.70~dB of Section~\ref{sec:res:diagnosis}, identical on the two sources for the reason given there. Under the rigid class the same gap falls to 2.13, 2.29 and 2.80~dB on vocals and to 0.48, 0.55 and 0.80~dB on accompaniment, as Fig.~\ref{fig:constraint} shows. Two thirds to nine tenths of the loss is therefore attributable to the parametrisation of the selection class alone, with no change to the dictionary, to the reconstruction class or to the metric.

\begin{figure*}[!t]
\centering
\includegraphics[width=0.80\textwidth]{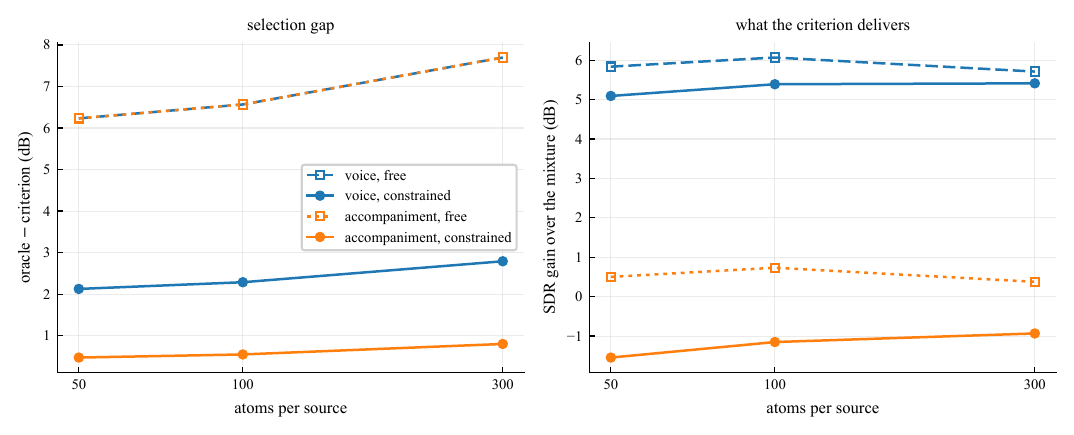}
\caption{The selection gap before and after the phase constraint. Left, the oracle-to-criterion gap against dictionary size, under the free per-bin deformation and under the rigid class \eqref{eq:ramp}, both sources; right, the quality each criterion delivers. The gap collapses and its symmetry between sources breaks, the signature Corollary~\ref{cor:interp} predicts, while the criterion's own quality now grows with dictionary size.}
\label{fig:constraint}
\end{figure*}

What survives is small in absolute terms and unambiguous in trend. The residual gap \emph{grows} with the dictionary on both sources, by 0.67~dB on vocals and 0.32~dB on accompaniment between 50 and 300 atoms, and it does so while the oracle of the same class improves over the same range, which no rule short of capacity does. This is the prediction of \eqref{eq:crossterm}: a larger dictionary offers more atoms that exploit the cross-term, and the rule takes them. The criterion therefore ranks, and what remains is that it ranks against the wrong target.

\subsection{Quality Against Capacity as $k$ Grows}

The same lock is read a second time on the reconstruction class, where it is larger and where the sweep over $k$ exposes its mechanism. Capacity, atoms chosen against the truth, and quality, atoms chosen against the mixture, diverge monotonically as $k$ grows: on vocals at 300 atoms per source the gap is 2.75, 3.31, 4.12, 5.34 and 7.42~dB for $k = 1, 2, 4, 8, 16$, and on accompaniment 0.85, 1.06, 1.53, 2.56 and 4.53~dB. It also grows with the dictionary at fixed $k$, from 2.09 to 2.75~dB at $k = 1$ and from 6.36 to 7.42~dB at $k = 16$, so the compounding predicted by \eqref{eq:crossterm} is measured on two axes at once.

Over the same sweep, capacity rises from $+8.56$ to $+12.99$~dB above the mixture while the quality actually delivered stays flat, $+5.81$, $+6.02$, $+6.11$, $+6.05$, then $+5.57$~dB, peaking at $k = 4$ and falling at $k = 16$; Fig.~\ref{fig:selection} hatches the difference. The whole of the capacity that the reconstruction class gains from a larger $k$ is lost again by the criterion that populates it. So $k$ does not control accuracy under the present selection rule. The operating point of Table~\ref{tab:main} is $k = 2$ or $4$ rather than the largest value measured. Two independent measurements, a pair rule with three parameters and a local combination with $k_1 + k_2$, widen against their own oracles along the same two axes, so the residual cause is the target of the criterion rather than the size of either class.

\begin{figure*}[!t]
\centering
\includegraphics[width=0.80\textwidth]{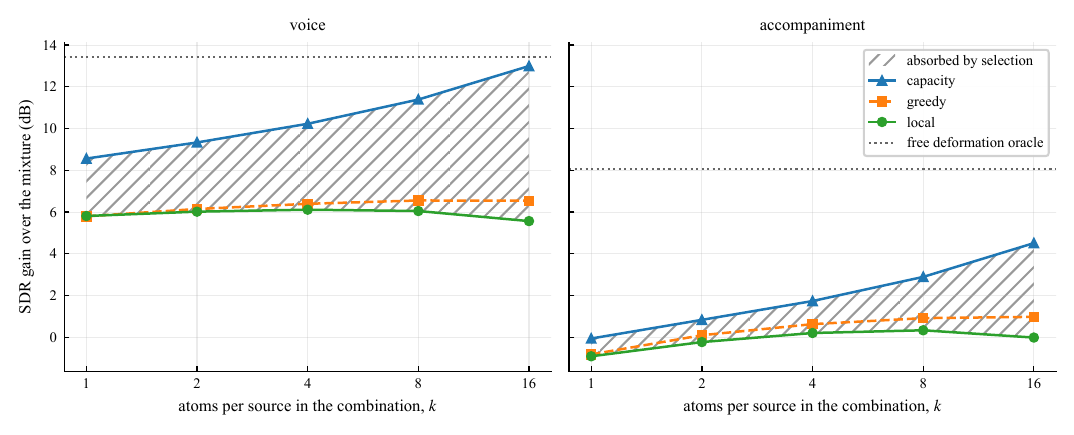}
\caption{Quality against capacity as $k$ grows, both sources, 300 atoms per source. Capacity is the projection of the true source onto the span of $k$ atoms aligned and chosen against the truth; quality is the same class populated by the deployed rule. The hatched selection lock absorbs the entire 4.4~dB that capacity gains between $k = 1$ and $k = 16$.}
\label{fig:selection}
\end{figure*}

\subsection{Absolute Level and Baselines}
\label{sec:level}

Table~\ref{tab:main} puts the repaired rule against the original method, against two baselines and against the references of Section~\ref{sec:protocol}. The absolute level is read against $\mstar$, the ceiling of the masking class, rather than against the ideal ratio mask: on this protocol the mask and the oracle Wiener filter both sit about 3~dB below $\mstar$, so a deficit phrased against the mask understates it by that margin. The best repaired rule of the table sits 10.0~dB below $\mstar$ and 7.0~dB below the ideal ratio mask on vocals at 300 atoms. Its paired gain over the original method is 0.9 to 1.2~dB across the three dictionary sizes, with a between-track deviation of 1.3 to 1.4~dB, hence a paired standard error of about $0.19$~dB over the fifty tracks and a gain standing at four to six standard errors. The gain on the quality axis is therefore modest, the 6.2 to 7.7~dB of criterion loss measured in Section~\ref{sec:free} converting only in part. The oracle taken inside the same local class gains 3.4 to 4.5~dB over the original method at $k = 4$ and 6.1 to 7.3~dB at $k = 16$, against the 0.9 to 1.2~dB the deployed rule delivers. Closing that distance is the subject of Section~\ref{sec:lock}.

The two baselines answer two different objections. Supervised NMF with a fixed basis learned on exactly the frames the dictionary is drawn from is the comparison at comparable latency: 17.9~ms converged and 3.2~ms at twenty-five updates, against 10.7~ms for the repaired rule at 100 atoms per source. The structural proximity established in Section~\ref{sec:local} makes that comparison necessary. Open-Unmix \cite{stoter2019openunmix} is the comparison against a trained model of the task, retained over the stronger learned separators published since because it can be rerun end to end under our own protocol; on the same tracks, the same excerpts and the same metric, it reaches $+11.6$~dB over the mixture on vocals, 5.4~dB below $\mstar$ and 2.3~dB below the ideal ratio mask. Being bidirectional over a whole excerpt, it has no per-frame latency to report, and its mask is applied in its own 4096-point transform while we score at 1024. Two diagnostics confirm that its output stays in the masking class, three degrees of median phase deviation and two per cent of gains above one, so its $\dstar$ reads as a distance to the ceiling of our class.

\begin{table*}[!t]
\caption{Main results on the MUSDB18 test split, fifty tracks on five-second excerpts, 300 atoms per source where a dictionary applies. No excerpt was dropped by the silence guard, no track by a missing reference. $\dmix$ is the gain over the mixture, $\dstar$ the distance to the best real mask, both paired track by track. Latency is the per-frame time of the deployed rule alone. Rules whose estimates sum to the mixture read the same $\dstar$ on both sources: their per-source SDR difference is then the constant of Section~\ref{sec:res:diagnosis}, 5.33~dB here, also the gap between the two $\mstar$ columns.}
\label{tab:main}
\centering
\begin{tabular}{l l r r r r r}
\toprule
& & \multicolumn{2}{c}{vocals} & \multicolumn{2}{c}{accompaniment} & \\
\cmidrule(lr){3-4} \cmidrule(lr){5-6}
rule & class & $\dmix$ & $\dstar$ & $\dmix$ & $\dstar$ & latency \\
\midrule
Def-MAP \cite{baelde2019thesis}      & free per bin      & $+5.71$  & $-11.26$ & $+0.38$ & $-11.26$ & 1037.2 \\
oracle pair, free deformation        & free per bin      & $+13.41$ & $-3.56$  & $+8.07$ & $-3.56$  & n/a \\
rigid pair, $k = 1$                  & gain and delay    & $+5.41$  & $-11.56$ & $-0.93$ & $-12.57$ & 12.2 \\
local combination, $k = 4$           & local, $\alpha=0$ & $+6.11$  & $-10.86$ & $+0.20$ & $-11.43$ & 10.7 \\
local combination, $k = 4$, $\alpha > 0$ & local, $\alpha=0.75$ & $+6.93$ & $-10.04$ & $+1.47$ & $-10.16$ & 10.7 \\
capacity of the local class, $k = 4$ & local, oracle     & $+10.23$ & $-6.74$  & $+1.73$ & $-9.90$  & n/a \\
\midrule
supervised NMF, Wiener form          & mask              & $+7.72$  & $-9.25$  & $+2.39$ & $-9.25$  & 17.9 \\
supervised NMF, 25 updates           & mask              & $+7.82$  & $-9.15$  & $+2.49$ & $-9.15$  & 3.2 \\
Open-Unmix \cite{stoter2019openunmix} & trained          & $+11.61$ & $-5.36$ & $+6.27$ & $-5.36$  & n/a \\
\midrule
ideal ratio mask                     & mask              & $+13.92$ & $-3.05$  & $+8.59$ & $-3.05$  & n/a \\
best real mask $\mstar$              & mask              & $+16.97$ & $0.00$   & $+11.64$ & $0.00$  & n/a \\
\bottomrule
\end{tabular}
\end{table*}

\subsection{Delivered Quality by Source}

Read on the $\alpha = 0$ column, as Section~\ref{sec:dial} requires, the exemplar model converts training material into quality far more slowly than it gains capacity. At $k = 1$ the vocals gain 0.34~dB between 50 and 300 atoms per source, from $+5.47$ to $+5.81$~dB over the mixture, while the capacity of the same class gains 1.00~dB, from $+7.56$ to $+8.56$; at $k = 16$ the progression is not even monotone, $+5.56$, $+5.84$, $+5.57$. More material therefore raises the capacity of the class without raising what is delivered, which is the lock of Section~\ref{sec:lock} seen from a third angle. The comparison of capacities announced in Section~\ref{sec:local} settles on vocals: at $k = 16$ the local class matches the free per-bin oracle, short of it by 0.15, 0.03 and 0.42~dB at the three sizes, with $2k = 32$ complex gains, 64 real parameters per pair, against the $4F = 2052$ the free class grants itself, and from a pessimistic estimate of its own ceiling. One qualification bounds that: the three deficits are not ordered by dictionary size, so the claim holds at $k = 16$ and at the sizes measured rather than asymptotically.

On accompaniment the reserve of Section~\ref{sec:dial} removes one claim outright. At $k = 1$, where the reconstruction class coincides with the pair rule, the exemplar model alone sits \emph{below} the mixture taken as its own estimate, at $-1.53$, $-1.17$ and $-0.91$~dB, and the capacity of the class stays below the mixture at all three sizes as well, at $-1.01$, $-0.53$ and $-0.06$~dB; the local class does not approach the free oracle there either, staying 2.7 to 3.6~dB under it. Raising $k$ to 4 lifts both above the mixture, $+0.20$~dB delivered against $+1.73$~dB of capacity, by margins an order of magnitude smaller than the vocals figures of Table~\ref{tab:main}. The positive gain reported on that source, $+1.35$, $+1.65$ and $+1.55$~dB at the interior optimum $\alpha = 0.75$ and $k = 1$, therefore comes from the reabsorption term, which \eqref{eq:dial} identifies as a soft mask; at $k = 4$ the model carries $+0.20$ of the $+1.47$ measured at the same $\alpha$, and the mask the rest. The median phase error separates the two mechanisms, 83 degrees for the model against 9 to 13 degrees once the residual is reabsorbed, the phase then coming from the mixture rather than from the dictionary. Hence the scope: the degeneracy and its repair hold on both sources, exemplar-based separation is claimed on vocals alone.

\subsection{Measured Latency}

Latencies depend on $(N_1, N_2, F)$ and never on the audio, so they are measured on random spectra without any corpus, timing the inference rule alone and excluding the diagnostic rules a deployed separator never evaluates. The measurement is best-of-five on unoptimised single-frame numpy, on one core of an AMD Ryzen~7 7435HS, two conventions pulling in opposite directions, so the absolute margins against the 23~ms frame duration are indicative and what Table~\ref{tab:cost} supports is the comparison between rules, all timed the same way. The Def-MAP column is extrapolated from a timed sample of pairs, 300 atoms per source meaning ninety thousand interpreted solves per frame.

\begin{table}[!t]
\caption{Per-frame latency in milliseconds, single core, $F = 513$; one frame of 1024 samples at 44.1~kHz lasts 23~ms. The last column splits the repaired rule between alignment and pair scoring.}
\label{tab:cost}
\centering
\setlength{\tabcolsep}{4pt}
\begin{tabular}{r r r r r l}
\toprule
atoms/src & Def-MAP & ramp & local $k{=}4$ & local $k{=}16$ & align + score \\
\midrule
50   & 243.4    & 5.4   & 5.2   & 5.9   & 4.3 + 1.2 \\
100  & 1037.2   & 12.2  & 10.7  & 11.6  & 9.1 + 2.2 \\
300  & 9513.4   & 65.7  & 44.7  & 45.3  & 39.0 + 26.3 \\
1000 & 103798.0 & 316.4 & 128.8 & 128.2 & 116.6 + 180.3 \\
\bottomrule
\end{tabular}
\end{table}

The repair yields a factor of 47 to 806 on the local combination, the rule the abstract and Table~\ref{tab:main} report, and 45 to 328 on the rigid pair rule, which only ranks. Def-MAP is quadratic throughout, its latency multiplying by 4.3, 9.2 and 10.9 for dictionary ratios whose squares are 4, 9 and 11.1, while the ramp rule is alignment-dominated at 50 atoms and quadratic later, so the ratio grows and then saturates. The local combination is the faster of the two repaired rules, the gap widening with the dictionary exactly because the pair search is the quadratic term, and it is flat in $k$ to within a millisecond over the whole range measured, so raising $k$ leaves the latency unchanged. The frame duration itself is met up to 100 atoms per source and exceeded by a factor of 1.9 at 300, which is the size carrying the best quality figures, so real-time feasibility holds at the smaller sizes only. The last column locates where an optimisation would act, alignment dominating pair scoring up to 300 atoms and the ordering flipping at 1000, where the quadratic term overtakes the linear one.

\subsection{Scope and Limits}

Three limits bound the above, and the first two are properties of the repaired rule rather than of the measurement. Each atom's delay is estimated against the mixture and is therefore contaminated by the other source; the measured distance to capacity being small at the operating point this has not bitten here, and the clean path is to align, fit, subtract the other source's current estimate and realign. The estimator adds a coupling of its own, the complex gain fitted after the delay rotating the correlation and displacing its peak by roughly the gain's phase divided by the atom's centre frequency; decoupling it requires refining on the analytic envelope, hence another rule. And with the transition term of the original method the model is a factorial hidden Markov model \cite{ghahramani1997fhmm,roweis2000,radfar2019gainadapted} whose exact decoding is Viterbi in $O(N^2)$ per frame, the same quadratic growth as the pair search, and the theory above does not depend on it.

What this paper does not claim follows from the same tables. Not state-of-the-art quality: the repaired rule stays 10.0~dB below the ceiling of the masking class on vocals, and a trained model of the task is 4.7~dB closer to that ceiling. Not a rule dominating its own NMF baseline: at the operating point the twenty-five-update supervised NMF is ahead on both reported axes, quality and latency, and what the exemplar route buys is the diagnosis and a model carrying its own phase, not the level. Not exemplar-based separation on accompaniment. Not a blanket real-time claim.

\section{Conclusion}

Choosing a model by the residual it leaves is empty as soon as the class that produced the residual can interpolate the observation. Proposition~\ref{prop:degeneracy} states it in that generality, its hypothesis is a parameter count, and nothing in it is specific to audio or to exemplars: any procedure ranking hypotheses by the penalty of a fit flexible enough to reach the data is ranking on its regulariser, and the ranking is whatever that regulariser happens to reward.

An exemplar method on complex spectra is the instance that motivated the statement. With $4F$ free real parameters under $2F - 2$ constraints every candidate interpolated the mixture exactly, so its criterion preferred loud atoms and scored a silent pair perfectly, its dictionary and its deformation both being sound on vocals. Splitting the class in two, as the same proposition prescribes, recovers two thirds to nine tenths of the selection gap under the rigid selector and runs 47 to 806 times faster: three real parameters select, a local combination of aligned atoms rebuilds, and the richer reconstruction runs faster than the pair search it replaces. Only 0.9 to 1.2~dB of that gap reaches the output. What survives is one identified mechanism, selection against the mixture instead of against the source, which absorbs the entire capacity the reconstruction class gains and is now quantified at 7.4~dB. Its remedy is named and left open, a criterion penalising the overlap of the two spans whose approximations recover a tenth of it; anything that closes it converts directly, the capacity being already present.

\section*{Acknowledgment}

The companion code, the measurement layer and the journals behind every number in this paper are available at \url{https://github.com/mbaelde/defmap-repair}, tag \texttt{v1.0.0}.

\end{document}